\newcommand{\xmark}{\ding{55}}
\newtheorem{theorem}{Theorem}
\newcommand{\lighttgray}{\rowcolor[gray]{.95}}
\newcommand{\lightgray}{\rowcolor[gray]{.90}}
\newcommand{\darkgray}{\rowcolor[gray]{.75}}
\newcommand{\rev}[1]{{\leavevmode\color{black}#1}}
\newlength{\Oldarrayrulewidth}
\newcommand{\HHHline}[2]{%
  \noalign{\global\setlength{\Oldarrayrulewidth}{\arrayrulewidth}}%
  \noalign{\global\setlength{\arrayrulewidth}{#1}}\hhline{#2}%
  \noalign{\global\setlength{\arrayrulewidth}{\Oldarrayrulewidth}}}
\begin{document}

\title{Nocturnal Seizure Detection Using \\ Off-the-Shelf WiFi}

\author{{Belal Korany,~\IEEEmembership{Student Member,~IEEE,}
        and Yasamin Mostofi,~\IEEEmembership{Fellow,~IEEE}}
\thanks{B. Korany, and Y. Mostofi are with the Department
of Electrical and Computer Engineering, University of California, Santa Barbara, CA, 93106 USA. E-mail: \{belalkorany, ymostofi\}@ece.ucsb.edu. 
}
}

\maketitle

\begin{abstract}
Detection of nocturnal seizures in epilepsy patients is essential, both for the quick management of the seizure complications, and for the assessment of the ongoing seizure treatment. Traditional seizure detection products (e.g., wearables), however, are either very costly, uncomfortable, or unreliable. In this paper, we then propose to utilize everyday WiFi signals for robust, fast, and non-invasive detection of nocturnal seizures. We first present a new and rigorous mathematical characterization for the spectral content/bandwidth of the WiFi signal, measured on a WiFi device placed near a sleeping patient, during different kinds of sleep motions: seizures, normal movements (e.g. posture adjustments), and breathing. Based on this mathematical modeling, we propose a novel pipeline for processing the received WiFi signals to robustly detect all nocturnal non-breathing movements, and then classify them into normal body movements or seizures. In order to validate this, we carry out extensive experiments in 7 different typical bedroom locations, where a set of 20 actors simulate the state of having seizures (total of 260 instances), as well as normal sleep movements (total of 410 instances). Our proposed system detects 93.85\% of the seizures with a mean response time of only 5.69 seconds since the onset of the seizure. Moreover, our proposed system achieves a probability of false alarm of only 0.0097, when classifying normal sleep movements. Overall, our new mathematical modeling and experimental results show the great potential the ubiquitous WiFi signals have for detecting nocturnal seizures, which can provide better support for epilepsy patients and their caregivers.
\end{abstract}

\begin{IEEEkeywords}
Seizure Detection, Sleep Monitoring, Breathing Monitoring, WiFi
\end{IEEEkeywords}

\maketitle

\section{Introduction}\label{sec:intro}
Epilepsy is a neurological disorder that causes a patient to have different kinds of seizures. It has gained a lot of attention in the public health domain since it is one of the most common neurological disorders, causing a large number of people to suffer from persistent health and socioeconomic issues. The World Health Organization (WHO) estimates that 50 million people around the world suffer from epilepsy, as of 2019 \cite{who_epilepsy}.  
Epilepsy is treated using different Anti-Epileptic Drugs (AEDs), depending on the specific type of seizure it is causing. Assessment of the ongoing seizure treatment requires the caregivers of the patient to continuously monitor and document the seizures (i.e., their frequency and duration). Seizures which take place during night sleep (\textbf{medically known as \textit{Nocturnal Seizures}}) then pose a higher risk for epilepsy patients, since they can go unobserved by the caregivers \cite{van2018nocturnal}. This necessitates the need for in-home seizure monitoring devices that can detect nocturnal seizures in epilepsy patients and alert their caregivers. 
The presence of a caregiver during a seizure is also very important so that they can help the patient, prevent them from falling, administer rescue medications (if necessary), and/or call for medical help if the seizure is lasting for too long. Moreover, patients who continue to have unattended nocturnal seizures have a higher risk of death due to the complications caused by the unattended seizures, a condition that is medically known as Sudden Unexpected Death in Epilepsy (SUDEP) \cite{van2018nocturnal}. SUDEP has been found to usually follow a specific type of seizures called \textit{tonic-clonic} seizures, which happens more frequently than other types during sleep \cite{lhatoo2016nonseizure}. Tonic-clonic seizures are characterized by a \textit{tonic} phase, in which the body muscles stiffen for a few seconds, followed by a \textit{clonic} phase, in which the body muscles rapidly and rhythmically jerk for 1-3 minutes \cite{jenssen2006long}.\footnote{This paper will focus on tonic-clonic seizures. Therefore, unless otherwise stated, we henceforth use the term "seizure" to refer to a tonic-clonic seizure.}

In order to detect nocturnal seizures, several products have been made available, such as smart watches \cite{embrace2}, smart mattresses \cite{mp5}, and cameras \cite{sami}. Smart watches measure the acceleration of the wrist to detect violent jerky movements, while smart mattresses measure the changes in the pressure on the mattress. However, the large cost of these products prohibits their widespread use. For instance, smart watches (e.g., Apple Watch and Embrace2 \cite{embrace2}), and the MP5 bed motion monitoring unit \cite{mp5} all cost more than \$250 per unit. Moreover, the comfort of patients and the reliability of detection of some of these products have been questioned by several studies, as we shall discuss in Sec.~\ref{sec:related_work}.

On the other hand,  Radio Frequency (RF) signals (e.g., WiFi) have become ubiquitous these days, due to the rapid growth of the number of wireless devices. These signals interact and bounce off of different objects in the environment, thereby carrying crucial information about them. Consequently, researchers in the RF sensing community have utilized RF signals to realize various applications, e.g. localization and tracking \cite{karanam2018magnitude}, imaging \cite{gonzalez2013cooperative}, health monitoring \cite{nandakumar2015contactless}, occupancy estimation \cite{depatla2018crowd},  activity recognition \cite{cai2020teaching}, and others. 

In this paper, we propose to utilize RF signals to detect nocturnal seizures in epilepsy patients. Using everyday RF signals, i.e. WiFi signals, for such a task has several advantages. First, it is an affordable solution when compared to the high cost of existing approaches. It is also contactless since it does not require the patient to wear any device or have units installed under their mattress. Moreover, an RF-based system, unlike cameras, does not require any lighting conditions to accurately achieve its task. In this paper, we then propose to use a pair of WiFi transceivers to detect nocturnal seizures. More specifically, we propose a \textit{robust}, \textit{fast}, and \textit{theoretically-driven} approach to process the WiFi Channel State Information (CSI) measured on a WiFi receiver device placed near a sleeping patient, in order to extract their motion information and decide whether the motion indicates a seizure or not. By "robust", we mean that our proposed framework has a very low probability of false alarm, i.e., it has a very low probability of declaring a seizure when there is none, while detecting all the seizures with a high probability. This is important as the sleeping person may have several normal body movements, such as pose adjustments, and they should not be classified as a seizure. By "fast", we mean that our system detects a seizure in a very short time since its onset, in order to alert the caregiver in a timely manner.  Finally, by "theoretically-driven", we mean that our proposed approach is backed by a new and rigorous mathematical characterization of the spectral content of the received signal during sleep-related movements: seizure, normal body movements, and breathing.

In our setup, a pair of WiFi transceivers (e.g., two laptops) are placed near the patient's bed. The WiFi receiver measures both the WiFi CSI squared magnitude signal and the phase difference between the antennas of the receiver (total of 3 antennas) for the purpose of seizure detection. We then propose a new mathematical characterization for the spectral content of the received WiFi signal during motions relevant to sleep, i.e., seizure, normal sleep movements, and breathing, and show how our spectral analysis can be used to design a new and robust seizure detection pipeline.  We next explicitly discuss the contributions of this paper.

\textbf{Statement of Contributions:}

\noindent 1. We develop a novel and rigorous mathematical model for the received CSI squared magnitude signal as well as the CSI antenna phase difference during different kinds of motions relevant to sleep: seizure, normal body movements, and breathing. More specifically, we first show that both the WiFi CSI squared magnitude and phase difference signals are frequency-modulated by the body motion. 
We then show our main theoretical contribution: to mathematically characterize the spectral content/bandwidth of the WiFi CSI signal during the aforementioned motions. Based on this new spectral analysis, we then show that the bandwidth of the received WiFi signal can be used to robustly and efficiently differentiate seizure events from normal sleep movements. 

\noindent 2. Based on our theoretical analysis, we propose a new pipeline for the detection of nocturnal seizures using WiFi CSI, which consists of the following 3 steps. First, our data pre-processing pipeline denoises the raw measured CSI and selects the least noisy data streams of different receiver antennas/subcarriers using our spectral analysis findings. Then, our event detection algorithm detects any kind of non-breathing motion, based on the spectral content of the denoised CSI. Finally, an event classification algorithm decides whether a detected event is a seizure or normal body movement, based on the bandwidth of the WiFi signal during the event.

\noindent 3. In order to validate our proposed framework, we carry out extensive experiments on 20 test subjects (5 females and 15 males) in 7 different locations of typical bedrooms, where the subjects act out seizures and normal sleep movements while we collect WiFi CSI data. In total, we collect 260 different seizure instances and 410 different normal non-breathing sleep movement instances. Our system was able to detect 93.85\% of the seizures with an average response time of 5.69 seconds since the onset of the seizure, which is much less than the state of the art, as we shall see in Sec. \ref{sec:results}. Moreover, in terms of false alarm rate (the probability that a normal sleep event is classified as seizure), our system had a false alarm probability of 0.0097, which indicates its robust performance.  We further study the impact of varying several different parameters (e.g., TX/RX positions) on the performance of our proposed system. Overall, our results establish that our proposed mathematically-motivated system is fast and robust and is also independent of person's pose/orientation.

As we shall see, our derivations can also contribute beyond seizure detection, in the general area of breathing-based RF sensing, since they show that a common assumption regarding the frequency content of the received signal during normal breathing is not always correct, explaining some of the unexplained observations in the corresponding literature.

\textbf{Remark 1:} In this paper, we use the term \textit{normal sleep events} to refer to normal non-breathing body movements during sleep, such as pose adjustments, stretching, scratching, coughing, sneezing, jerking, and others.

\section{Related Work}\label{sec:related_work}

To the best of our knowledge, this work is the first to use RF signals for seizure detection.  In this section, we summarize the state-of-the-art related to different aspects of our problem of interest. 

\subsection{WiFi-based Vital Signs Monitoring}\label{sec:breath_literature}
There has been a great body of work on utilizing wireless signals for vital signs monitoring, e.g. using high-bandwidth radar \cite{adib2015smart}, mmWave \cite{yang2016monitoring}, or WiFi. In this paper, we are interested in utilizing off-the-shelf WiFi devices for seizure detection.

Several papers have utilized the fine-grained WiFi CSI magnitude data for breathing rate and/or heart rate estimation \cite{liu2015contactless, wang2016human}. Other researchers utilized the CSI phase difference between receiver antennas to achieve the same task \cite{wang2017phasebeat, wang2017resbeat}. None of such RF-based existing work, however, is on seizure detection. Nevertheless, our findings can have a significant impact on such work for the following reason. All the existing WiFi CSI-based breathing rate estimation work assume that the bandwidth of the received CSI signal, in the vicinity of a person who is breathing normally, is the same as the breathing rate.  In order for us to develop a robust nocturnal seizure detection system, we also need to fundamentally understand and characterize the spectral content of normal breathing in this paper.  As we shall see, using our proposed rigorous mathematical analysis, the bandwidth of the WiFi signal caused by normal breathing is not necessarily the same as the breathing rate and can be higher. As such, this paper can contribute to the ongoing research that is using breathing signals for other health monitoring applications.  In fact, our mathematical analysis can immediately explain the observation made in \cite{wang2016human} that the quality of the breathing rate estimation, which was designed assuming the signal bandwidth is the same as the breathing rate, degrades at some locations relative to others.  Similarly, it can explain the unexplained frequency peaks that were observed in \cite{wang2020respiration} and were attributed to noise.

\subsection{Seizure Detection and Analysis}
In-home seizure detection is an important topic that has gained a lot of attention in the research community. Most seizure detection algorithms in the literature rely on the detection of the motion of the clonic phase of the tonic-clonic seizure via accelerometry \cite{nijsen2008accelerometry, velez2016tracking, kusmakar2018automated}, 
and/or video analysis \cite{kalitzin2012automatic, geertsema2018automated}. In accelerometry, a wearable accelerometer is attached to one or more of the patient's body parts, such as wrist, ankle, and/or chest. In addition to their high cost, wearable devices are usually not well tolerated by certain groups of patients, such as children and people with intellectual disabilities, who usually try to dislodge the devices \cite{geertsema2018automated}. Furthermore, the authors of \cite{beniczky2013detection} concluded that commercial wrist-worn watches have a seizure detection accuracy of 89.7\%, which is not very high. 
Video-based seizure detection has shown good detection accuracy of more than 95\%, but with a high false alarm rate of 0.78 events per night {\cite{geertsema2018automated, kalitzin2012automatic}. However, video-based detection requires a clear unobstructed view of the patient with good lighting conditions, which may not always be possible, and further invade the patient's privacy. 
Overall, an accurate, non-invasive, comfortable, and affordable way of detecting nocturnal seizures is lacking, which is the main motivation for this paper.

\subsection{Sleep Analysis}
People engage in different kinds of normal non-breathing movements during sleep, such as posture adjustments and limb jerks. An important aspect of our proposed system is to minimize false alarms by identifying such \textit{normal events} and distinguishing them from seizure events. Hence, we utilize some of the results of the sleep analysis medical literature in order to design our system. For instance, \cite{coussens2014movement, de1992sleep} study the duration and rate of normal events during sleep, using accelerometry and video analysis, concluding that these events typically happen at a rate of 3 events per hour, last for an average of 8 to 10 seconds, and can go up to 15 seconds. The authors of {\cite{walch2019sleep}} have published an online dataset of accelerometry data of 31 healthy adults during their sleep. We shall utilize such results/data in this paper for our spectral analysis of  WiFi signals during normal sleep.

\section{Signal Model} \label{sec:signal_model}
In this section, we develop a mathematical model for the received WiFi CSI in a general setting, an example of which is shown in Fig. \ref{fig:scenario}. More specifically, a person is lying down on a bed in any generic pose while a WiFi transmitter (Tx) emits wireless signals that are reflected off of the person's body and received by a WiFi receiver (Rx). We first derive closed-form expressions for the WiFi CSI squared magnitude and the WiFi CSI phase difference signals, during a generic motion pattern of the body, in this part. In Sec. \ref{sec:spectral_analysis}, we then use this model to provide a new and rigorous mathematical analysis of the spectral content of the WiFi signals during specific kinds of sleep motions relevant to this paper, i.e., breathing, seizures, and normal sleep movements.

\begin{figure}
    \centering
    \includegraphics[width=0.6\linewidth]{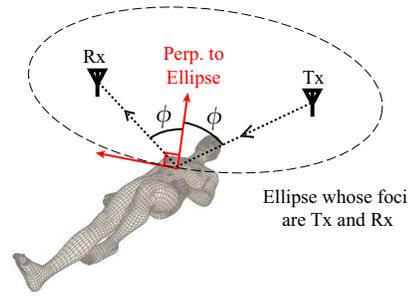}
    \caption{\small Illustration of the application scenario. A pair of WiFi transceivers collect WiFi CSI measurements while a person is sleeping, in order to analyze their sleep motions and detect if they are having a seizure. Note that our design does not assume or require that the person lies on their back, and they can be in any pose/orientation. Furthermore, the TX/RX can be in any configuration as well.}
    \label{fig:scenario}
\end{figure}

Let $c(t)$ denote the complex baseband received signal at the Rx, which can be decomposed into the direct path from the Tx to the Rx, and the reflected path off of the person's moving body. 
More specifically, $c(t)$ can be written as \cite{karanam2018magnitude}
\begin{equation}\label{eq_baseband}
    c(t) = \underbrace{\alpha_d e^{j\mu_d}}_{\substack{\text{direct path}}} + \underbrace{\alpha_r e^{j\left(\mu_r + \frac{2\pi}{\lambda} \psi \int \! v(t) dt \right)}}_{\text{reflected path}},
\end{equation}
where $\alpha_d$ and $\mu_d$ are the amplitude and phase of the direct path from the Tx to the Rx, $\alpha_r$ is the amplitude of the reflected path arriving at the Rx, $\mu_r$ is the phase of the reflected path at time $t=0$, $\psi=2\cos(\phi)$ is a scale parameter that depends on the location of the bed/person with respect to the Tx and Rx. Consider the ellipse whose foci are the Tx and Rx, which passes through the person's body, $\phi$ is then the angle between the line connecting the person to the Tx (or Rx) and the perpendicular line to this ellipse at the point that it passes through the person's body (see Fig.~\ref{fig:scenario}). $v(t)$ is the instantaneous speed component of the body motion along the perpendicular line to the ellipse, and $\lambda$ is the wavelength.

Note that the value of $\psi$ depends on the scene configuration, i.e., the relative location of the bed with respect to the Tx and Rx, and does not depend on the person's posture and orientation while sleeping. In other words, if the width of the bed is small as compared to the Tx-Rx distance, or the person does not drastically move from one side of the bed to the other (which is common in practice), the sleeping person's general location with respect to the Tx and the Rx does not drastically change, and hence, $\psi$ can be taken as a constant and can be calculated only once upon Tx-Rx placement. The person can change their pose/orientation several times, but those movements will not affect the value of $\psi$.

For simplicity of notation, we define $\beta=\frac{2\pi\psi}{\lambda}$, and $d(t)=\int \! v(t) dt$. Hence, the phase of the reflected path at the Rx becomes $\mu_r + \beta d(t)$.
Next, we derive closed-form expressions for the squared magnitude and phase of $c(t)$ to understand the information they carry about the body's motion.

\textbf{Remark 2:}
The static multipath in the environment does not affect this analysis since all the static multipath can be integrated into the first term of Eq. \ref{eq_baseband}. This indicates that the performance of the system is agnostic to the deployment environment. This observation will be further validated by our extensive experiments in several different locations and real-world scenarios, as we shall see in Sec. \ref{sec:results}.

\noindent\textbf{Squared Magnitude of $c(t)$:} The squared magnitude of $c(t)$ can be written, after a straightforward derivation, as follows,
\begin{equation}
    |c(t)|^2 = c(t) c^*(t) = \alpha_d^2 + \alpha_r^2 + A_m \cos \left(\beta d(t) + \Delta\mu_m\right),
\end{equation}
where $A_m = 2\alpha_d\alpha_r$, and $\Delta\mu_m=\mu_r-\mu_d$ is the difference between the initial phase of the reflected path and the phase of the direct path. Since the DC component of $|c(t)|^2$ does not carry any information about the motion of the body, we subtract the DC term (which can be easily implemented in practice) to have the following,

\begin{equation}\label{eq_squared_magnitude}
    s_m(t) = A_m\cos \left(\beta \int\!\! v(t) dt +  \Delta\mu_m\right).
\end{equation}

For the ease of discussion, we then refer to $s_m(t)$ as the squared magnitude signal in the rest of the paper.

\noindent\textbf{Phase of $c(t)$:}
Without loss of generality, we analyze the phase of the scaled signal $c'(t) = e^{-j\mu_d} c(t)/\alpha_d$. This scaling shifts the phase of $c(t)$ by a constant amount, preserving the time-varying behavior of the phase of $c(t)$ which carries the motion information of the body. Let $\theta(t)$ be the phase of $c'(t)$. It is easy to confirm that
\begin{equation}
    \theta(t) = \tan^{-1} \left( \frac{\frac{\alpha_r}{\alpha_d}\sin\left(\beta d(t) + \Delta\mu_m\right)}{1 + \frac{\alpha_r}{\alpha_d}\cos\left(\beta d(t) + \Delta\mu_m\right)} \right).
\end{equation}
Due to its longer length and the reflection loss at the body, we can assume that the amplitude of the reflected path is much less than that of the direct path, i.e. $\frac{\alpha_r}{\alpha_d} \ll 1$. In such a case, $\theta(t)$ can be approximated as
\begin{align} \label{eq_theta_approx1}
    \theta(t) &\approx \tan(\theta(t)) \nonumber\\
    % =\frac{\frac{\alpha_r}{\alpha_d}\sin\left(\beta d(t) + \Delta\mu_m\right)}{  1 + \frac{\alpha_r}{\alpha_d}\cos\left(\beta d(t) + \Delta\mu_m\right)} \nonumber\\
    &\approx \frac{\alpha_r}{\alpha_d}\sin\left(\beta d(t) + \Delta\mu_m\right) \left( 1 - \frac{\alpha_r}{\alpha_d}\cos\left(\beta d(t) + \Delta\mu_m\right) \right) \nonumber \\
    &= \frac{\alpha_r}{\alpha_d}\sin\left(\beta d(t) + \Delta\mu_m\right) - \frac{\alpha_r^2}{2\alpha_d^2}\sin\left(2\beta d(t) + 2\Delta\mu_m\right) \nonumber\\
    &\approx \frac{\alpha_r}{\alpha_d}\sin\left(\beta d(t) + \Delta\mu_m\right) ,
\end{align}
where the first order Taylor approximation $(1+x)^{-1} \approx 1-x$ for $x\ll 1$ is used in the second line to derive Eq. \ref{eq_theta_approx1}, since $\frac{\alpha_r}{\alpha_d} \ll 1$.

In practice, the phase measurements on off-the-shelf WiFi devices are corrupted by multiple sources of error, such as Carrier Frequency Offset (CFO) and Sampling Time Offset (STO), rendering these phase measurements unreliable \cite{zhuo2016identifying}. However, since different antennas of the same WiFi card share the same oscillator, those errors are common to all the antennas of the same card, and as such, the phase difference between two antennas of the same card carries stable phase information, as has been used in the literature. In this paper, we also rely on the phase difference between the antennas of one receiver WiFi card. Let $\theta_i(t)$ be the phase of the CSI at the $i$-th antenna of the Rx. The phase difference between the $i$-th and $j$-th receiver antennas can then be written as
\begin{equation} \label{eq_phase_diff}
    s_p(t) = \theta_i(t) - \theta_j(t) =  A_p\cos(\beta \int\!\!v(t) dt + \Delta\mu_p),
\end{equation}
where $A_p = 2(\alpha_r/\alpha_d) \sin\left(0.5(\Delta\mu_{m,i} - \Delta\mu_{m,j})\right)$, $\Delta\mu_p = 0.5(\Delta\mu_{m,i} + \Delta\mu_{m,j}) $, and $\Delta\mu_{m,i}$ and $\Delta\mu_{m,j}$ are the values of $\Delta\mu_m$ at the $i$-th and $j$-th receiver antennas, respectively.

Eq. \ref{eq_phase_diff} shows that as long as $\Delta\mu_{m,i} \ne \Delta\mu_{m,j}$ (which depends on the direct and reflected path lengths to the receiver antennas as well as the wavelength), the phase difference between the receiver antennas has a similar structure, in terms of the information it carries about the body movements, as the squared magnitude of the received signals (Eq. \ref{eq_squared_magnitude}).

\textbf{Body Acting as an FM Radio:} Frequency Modulation (FM) is a classic analog transmission technique, introduced in 1902 \cite{tucker1970invention}, to ensure robust transmissions for radio applications. A typical FM transmitted signal will have the form $\cos(2\pi f_ct +k_\text{f} \int \!m(t) dt)$, where $m(t)$ is the signal of interest to be transmitted, $f_c$ is the carrier frequency, and $k_\text{f}$ is the modulation index constant.  As can be seen, both the squared magnitude signal of Eq.~\ref{eq_squared_magnitude} and the phase difference signal of Eq. \ref{eq_phase_diff} can be interpreted as FM signals, in which $v(t)$ is the modulating signal and $f_c = 0$.  In other words, the moving body part (e.g., the chest) can be thought of as modulating the body motion into an FM signal that is then received by the WiFi receiver.  This way of interpretation allows us to delve into the classic mathematical analysis of FM signals for our system design, as we shall see in the next section.
However, one difference with a typical FM signal is the existence of the $\Delta\mu_m$ term in Eq. \ref{eq_squared_magnitude} (or $\Delta\mu_p$ in Eq. \ref{eq_phase_diff}). We shall see the impact of such a term in the spectral analysis of the next section.

\section{Spectral analysis of the Received Signal} \label{sec:spectral_analysis}

In this section, we analyze the received squared magnitude signal (or, equivalently, the phase difference signal) of Sec. \ref{sec:signal_model}, for different kinds of nocturnal body movements: breathing, seizures, and normal sleep events (e.g., posture shifts, moving limbs, etc.). More specifically, we develop our first major contribution: \textit{ \textbf{to mathematically characterize the spectral content/bandwidth of the received signal for each of the aforementioned three types of motions}}. We shall see that, due to the different body motion characteristics during a seizure as compared to normal sleep events, the spectral content of the received signals can be used to design a robust nocturnal seizure detection system, as we shall see in Sec. \ref{sec:system_description}.

Let $y(t)=A\cos(\beta \int\!\!v(t) dt+\Delta\mu)$ represent a general form for either the squared magnitude signal of Eq. \ref{eq_squared_magnitude} or the phase difference signal of Eq. \ref{eq_phase_diff}.  First, assume that $v(t)$ is a sinusoidal signal of the form $v(t)=v_{\text{max}} \cos(\omega_o t)$. This assumption applies to both the seizure and respiration cases.  The following characterizes the Fourier response of $y(t)$.

\begin{theorem}\label{theorem_spectrum}
Consider the signal $y(t) = A \cos(\beta \int\!\!v(t) dt + \Delta\mu)$ with a sinusoidal speed signal of $v(t) = v_{\text{max}} \cos(\omega_o t)$. The spectrum of this signal, i.e., its Fourier transform, can be written as follows,
\begin{align}\label{eq_theorem}
    Y(f) &= A\cos(\Delta\mu) \sum_{\substack{n \text{ even} \\ n \ge 0}} J_n(\beta') \left( \delta(f-nf_o) + \delta(f+nf_o) \right) \nonumber \\
    & +j A \sin(\Delta\mu) \sum_{\substack{n \text{ odd} \\ n > 0}} J_n(\beta') \left( \delta(f-nf_o) + \delta(f+nf_o) \right),
\end{align}
where $J_n(.)$ is the $n$-th order Bessel function, $\beta'=\beta v_\text{max}/\omega_o$, $\delta(.)$ is the Dirac-Delta function, and $f_o = \omega_o/2\pi$ is the fundamental frequency of $v(t)$.
\end{theorem}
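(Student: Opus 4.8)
The plan is to reduce the statement to the classical Jacobi--Anger expansion of a frequency-modulated tone. First I would evaluate the phase explicitly: since $v(t) = v_{\max}\cos(\omega_o t)$, we have $\int\! v(t)\,dt = (v_{\max}/\omega_o)\sin(\omega_o t) + \mathrm{const}$, and the additive constant merely shifts $\Delta\mu$, so without loss of generality $y(t) = A\cos\!\big(\beta' \sin(\omega_o t) + \Delta\mu\big)$ with $\beta' = \beta v_{\max}/\omega_o$. This is exactly a unit-carrier-free FM tone, scaled by $A$ and offset in phase by $\Delta\mu$, which is why the Bessel-function machinery applies.

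Next I would pass to complex exponentials and apply the Jacobi--Anger identity. Writing $y(t) = \tfrac{A}{2}\big(e^{j\Delta\mu} e^{j\beta'\sin(\omega_o t)} + e^{-j\Delta\mu} e^{-j\beta'\sin(\omega_o t)}\big)$ and invoking $e^{j z \sin\theta} = \sum_{n=-\infty}^{\infty} J_n(z) e^{jn\theta}$, together with the parity identity $J_{-n}(z) = (-1)^n J_n(z)$ to re-index the second sum, I would collect the coefficient of $e^{jn\omega_o t}$ as
\begin{equation}
\tfrac{A}{2} J_n(\beta')\big(e^{j\Delta\mu} + (-1)^n e^{-j\Delta\mu}\big)
=
\begin{cases}
A\cos(\Delta\mu)\, J_n(\beta'), & n \text{ even},\\[3pt]
j A\sin(\Delta\mu)\, J_n(\beta'), & n \text{ odd}.
\end{cases}
\end{equation}
Thus $y(t)$ is an absolutely and uniformly convergent Fourier series in the harmonics $n\omega_o$ (the $J_n(\beta')$ decay faster than any geometric rate in $|n|$ for fixed $\beta'$), so it is a bounded almost-periodic function whose spectrum is a discrete sum of Dirac masses.

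Finally I would take the Fourier transform term by term, using $\mathcal{F}\{e^{jn\omega_o t}\}(f) = \delta(f - n f_o)$ with $f_o = \omega_o/2\pi$, and then fold together the $+n$ and $-n$ harmonics: for even $n$ the coefficients at $\pm n f_o$ coincide, producing the $\delta(f - n f_o) + \delta(f + n f_o)$ grouping of the first line of \eqref{eq_theorem}, and for odd $n$ the behaviour of $J_n$ under $n \mapsto -n$ pairs the masses at $\pm n f_o$ into the analogous grouping in the second line; restricting the even sum to $n \ge 0$ and the odd sum to $n > 0$ (with the usual convention that the $n=0$ term is not double-counted) then gives exactly Eq.~\eqref{eq_theorem}.

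The one place that needs a word of care — and it is mild — is the analytic justification for exchanging the infinite Bessel series with the Fourier transform and for interpreting the result as a tempered distribution. This is routine here precisely because the Jacobi--Anger series converges absolutely and uniformly, so $y$ lies in the class of almost-periodic functions for which the line-spectrum representation is rigorous; everything else is bookkeeping of the even versus odd harmonics and of the phase offset $\Delta\mu$, which is the single feature that distinguishes this from the textbook FM spectrum (where $\Delta\mu$ would be zero).
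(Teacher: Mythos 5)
Your proof is correct and takes essentially the same route as the paper's: both rest on the Jacobi--Anger (Fourier--Bessel) expansion of $e^{j\beta'\sin(\omega_o t)}$, the parity identity $J_{-n}(x)=(-1)^n J_n(x)$ to separate even and odd harmonics, and a term-by-term Fourier transform into Dirac masses at $nf_o$. The only difference is bookkeeping --- you carry the phase offset $\Delta\mu$ through complex exponentials while the paper splits $y(t)$ into its $\cos(\Delta\mu)$ and $\sin(\Delta\mu)$ (real/imaginary) parts --- which is cosmetic.
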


% \begin{proof}
% See Appendix A.
% \end{proof}

\begin{proof}
If $v(t) = v_{\text{max}}\cos(\omega_ot)$, then $y(t)$ becomes
\begin{align}\label{eq_appendix_1}
    y(t) &= A\cos(\Delta\mu)\cos(\beta' \sin(\omega_o t)) \nonumber \\ &\qquad\qquad\qquad- A\sin(\Delta\mu)\sin(\beta' \sin(\omega_o t)) \nonumber \\
    &= A\cos(\Delta\mu)\mathcal{R}\left\{e^{j\beta' \sin(\omega_o t)}\right\} \nonumber \\ 
    &\qquad\qquad\qquad - A\sin(\Delta\mu)\mathcal{I}\left\{e^{j\beta' \sin(\omega_o t)}\right\}
\end{align}
where $\beta' = \beta v_\text{max}/\omega_o$, $\mathcal{R}\{.\}$ is the real part of the argument, and $\mathcal{I}\{.\}$ is the imaginary part of the argument. The exponential term $e^{j\beta'\sin(\omega_ot)}$ is periodic with a period $2\pi/\omega_o$, and can be expanded by its Fourier Series as \cite{lathi1998modern}
\begin{equation}\label{eq_fourier_bessel}
    e^{j\beta'\sin(\omega_ot)} = \sum_{n=-\infty}^\infty J_n(\beta')e^{jn\omega_ot},
\end{equation}
where $J_n(.)$ is the $n$-th order Bessel function. By substituting Eq. \ref{eq_fourier_bessel} into Eq. \ref{eq_appendix_1},  we get
\begin{align*}
    y(t) =
     \sum_{n=-\infty}^\infty A\,J_n(\beta')&\left( \cos(\Delta\mu)\cos(n\omega_ot)\right. \\ &\left.\qquad- \sin(\Delta\mu) \sin(n\omega_ot) \right).
\end{align*}
By making use of the fact that $J_{-n}(x) = (-1)^nJ_n(x)$, $y(t)$ can be written as
\begin{align*}
    y(t) =  &2A\cos(\Delta\mu) \sum_{\substack{n\text{ even}\\n\ge0}}J_n(\beta')\cos(n\omega_ot) \\ &- 2A \sin(\Delta\mu) \sum_{\substack{n\text{ odd}\\n>0}}J_n(\beta')\sin(n\omega_ot).
\end{align*}
By taking the Fourier transform of $y(t)$, we get Eq. \ref{eq_theorem}.
\end{proof}

Theorem \ref{theorem_spectrum} states that the spectrum of $y(t)$ consists of an infinite number of deltas, located at the fundamental frequency of $v(t)$ and its harmonics.  We next characterize the bandwidth of this signal. In order to do so, we need to find the frequency point after which the power of the subsequent delta functions has become negligible, as compared to the earlier terms.

\begin{theorem}\label{theorem_bw}
The bandwidth of $y(t)$ can be characterized as follows, for $\beta' \ge 1$, 
\begin{equation*}
    \left. BW \right\rvert_{\beta'\ge 1} = (\beta'+1)f_o = \psi v_\text{max}/\lambda+f_o,
\end{equation*}
where $f_o$ is the fundamental frequency of $v(t)$. Moreover, for $\beta'<1$, the bandwidth of $y(t)$ is best characterized as follows,
\begin{equation*}
    \left. BW \right\rvert_{\beta'<1} = 2f_o.
\end{equation*}
\end{theorem}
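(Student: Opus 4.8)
The plan is to read the bandwidth straight off the line spectrum supplied by Theorem~\ref{theorem_spectrum}. That theorem says $y(t)$ deposits power only at the harmonics $nf_o$, with the power at $nf_o$ proportional to $J_n^2(\beta')$ (scaled by $\cos^2\Delta\mu$ for even $n$ and $\sin^2\Delta\mu$ for odd $n$); so ``bandwidth'' here means the largest $nf_o$ at which $J_n(\beta')$ is still non-negligible compared with the leading terms, and the whole question reduces to the decay of the Bessel coefficients $J_n(\beta')$ in the index $n$ at fixed argument $\beta'$. First I would record the two classical facts I need: (i) Parseval applied to the Fourier--Bessel expansion in Eq.~\ref{eq_fourier_bessel} gives $\sum_n J_n^2(\beta') = 1$, so the per-harmonic powers form a normalized distribution; and (ii) for fixed $\beta'$ the magnitude $|J_n(\beta')|$ is $O(1)$ while $n$ is below $\beta'$ and then falls off super-exponentially once $n$ passes $\beta'$ --- the standard transition of $J_n(x)$ from its oscillatory regime ($n<x$) to its exponentially-small regime ($n>x$), quantified by the asymptotics $J_n(x)\sim \frac{1}{\sqrt{2\pi n}}\bigl(\tfrac{ex}{2n}\bigr)^{n}$ and by elementary bounds on the tail of the defining series.

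For $\beta'\ge 1$ I would then argue exactly as in Carson's classical treatment of FM: retaining the harmonics with $|n|\le \beta'+1$ (rounding $\beta'+1$ up) already captures essentially all of the unit power, i.e.\ $\sum_{|n|>\beta'+1}J_n^2(\beta')$ falls below any reasonable ``negligible'' threshold, while every delta located beyond $(\beta'+1)f_o$ is negligible. Hence the highest frequency at which $y(t)$ carries appreciable power --- which I take as the bandwidth --- is $(\beta'+1)f_o$. The closed form then comes purely from algebra: substituting $\beta'=\beta v_\text{max}/\omega_o$ with $\beta = 2\pi\psi/\lambda$ and $\omega_o = 2\pi f_o$ gives $\beta' f_o = \psi v_\text{max}/\lambda$, and therefore $BW = \psi v_\text{max}/\lambda + f_o$.

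For $\beta'<1$ the decay of $J_n(\beta')$ already sets in at very small $n$: $J_0(\beta')\approx 1$, $J_1(\beta')\approx \beta'/2$, $J_2(\beta')\approx (\beta')^2/8$, and $|J_n(\beta')|$ for $n\ge 3$ is truly negligible, so only the harmonics at $f_o$ and $2f_o$ can matter. The subtlety I would spell out is the role of $\Delta\mu$: by Theorem~\ref{theorem_spectrum} the coefficient at $f_o$ is proportional to $\sin\Delta\mu$ and the one at $2f_o$ to $\cos\Delta\mu$, so for $\Delta\mu$ near $\pi/2$ only the $f_o$ line survives while for $\Delta\mu$ near $0$ only the $2f_o$ line survives (the $n=0$ term is DC and carries no motion information, so it is excluded from the effective bandwidth). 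Since $\Delta\mu$ is not under our control, a bandwidth estimate valid uniformly in $\Delta\mu$ must keep everything out to $2f_o$, giving $BW|_{\beta'<1} = 2f_o$; note this matches the first formula at $\beta'=1$.

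The main obstacle is that ``bandwidth'' is an engineering notion --- ``the frequency point after which the subsequent delta powers are negligible'' --- so the real work of the proof is fixing a threshold and checking that the Bessel tail $\sum_{|n|>\beta'+1}J_n^2(\beta')$ sits below it; this is precisely Carson's rule, and it rests on the $n\approx x$ transition behavior of $J_n(x)$ rather than on any single clean identity. The only genuinely new wrinkle beyond textbook FM analysis is the $\Delta\mu$ bookkeeping in the small-index regime, and that is exactly what forces the narrowband answer to be quoted as $2f_o$ rather than $f_o$.
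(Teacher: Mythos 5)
Your proposal is correct and follows essentially the same route as the paper's proof: both rest on the negligibility of $J_n(\beta')$ for $n>\beta'+1$ (Carson's rule) to obtain $(\beta'+1)f_o$ when $\beta'\ge 1$, and both handle $\beta'<1$ by observing that the $\sin(\Delta\mu)$ scaling can suppress the $n=1$ line while the $\cos(\Delta\mu)$-scaled $n=2$ line remains comparable, forcing the $\Delta\mu$-uniform answer $2f_o$. The extra details you supply (Parseval normalization, the explicit Bessel asymptotics) only elaborate the same argument.
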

\begin{proof}
It is well established in the literature that $J_n(\beta')$ is negligible for $n > \beta'+1$ \cite{lathi1998modern}. By applying this to  Eq. \ref{eq_theorem}, we can then estimate the bandwidth as follows: $BW=(\beta'+1)f_o$ for $\beta'\ge 1$ since some even and odd terms are both present for $\beta' \ge 1$ and terms can be compared accordingly within each even and odd groups. When $\beta'< 1$, however, the previous result implies that the term $n=1$ is the only dominating term in the spectrum of $y(t)$. However, due to the different scaling factors of the even and odd terms in Eq.~\ref{eq_theorem}, there could exist cases (e.g., small $\Delta\mu$) where the term corresponding to $n=1$ is suppressed by the $\sin(\Delta\mu)$ factor. In such cases, even though $J_2(\beta')$ is small as compared to $J_1(\beta')$, $\cos(\Delta\mu)J_2(\beta')$ can be comparable or larger than $\sin(\Delta\mu)J_1(\beta')$. Higher order terms can always be neglected with respect to the first two terms. Hence, the bandwidth of $y(t)$ for the case of $\beta'<1$ is $2f_o$.
\end{proof}

\textbf{Remark 3:} In his seminal paper of {\cite{carson1922notes}}, J. Carson was the first to theoretically characterize the bandwidth of an FM signal and show that it can be larger than the bandwidth of the modulating signal. Carson has shown that his bandwidth rule is exact for sinusoidal modulating signals, but can be generalized to approximate the bandwidth for general non-sinusoidal modulating signals as well. As mentioned earlier, our received signal $y(t)$ has a close resemblance to an FM signal, except for the $\Delta\mu$ terms. As such, our bandwidth analysis has some resemblance to Carson's  derivations except for the impact of $\Delta\mu$. Following a similar argument to Carson's, we will then also use Theorem {\ref{theorem_bw}} to approximate the bandwidth of $y(t)$ when $v(t)$ is a general non-sinusoidal signal in the next section. In such a case, $f_o$ would denote the bandwidth of the signal $v(t)$.

Theorem \ref{theorem_bw} states that the bandwidth of $y(t)$ depends on motion parameters such as $v_\text{max}$ and $f_o$ (or the bandwidth of $v(t)$ for non-sinusoidal signals).  We next utilize Theorem \ref{theorem_bw} to estimate the bandwidth of the WiFi CSI \footnote{Henceforth, the bandwidth of WiFi CSI means either the squared magnitude or phase difference, since they both have the same generic form $y(t)$.} during three specific kinds of sleep-related motions: breathing, seizure, and normal sleep movements.

\subsection{CSI Bandwidth During Breathing}\label{sec_breathing_BW}
A sleeping person's chest volume expands and shrinks during the inhalation and exhalation phases of respiration. It is established in the literature that the instantaneous chest speed, i.e., $v(t)$ of Sec.~\ref{sec:signal_model}, can be approximated by a sinusoid of frequency $f_{o,\text{br}}$, where $f_{o,\text{br}}$ is the number of breathing cycles per second \cite{zhang2017wicare}. As such, Eq.~\ref{eq_theorem} can describe the spectrum of the WiFi signal during breathing.

In order to characterize the bandwidth for the case of normal breathing, we need to estimate $\beta'_\text{br} =\frac{\psi v_\text{max,br}}{\lambda f_{o,\text{br}}}$, where $f_{o,\text{br}}$ is the breathing rate of the person, which is typically in the range of 0.2 to 0.3~Hz~\cite{barrett2019ganong}. By integrating $v(t)$, it can be easily confirmed that $v_\text{max,br}/2\pi f_{o,\text{br}}$ is equal to the maximum chest displacement during respiration, which has been reported in the literature to be around $5~$mm \cite{wang2016human}. This results in $\beta'_\text{br} \approx 0.55$ when using WiFi channel 48, which has a carrier frequency $f_c = 5.24~$GHz, and $\psi = 1$.\footnote{In our experiments, we use WiFi channel 48 ($f_c = 5.24~$GHz) in a setup in which $\psi\approx 1$ (see Sec. {\ref{sec:experimental_validation}} for the detailed scene configuration). Extension to different values of $\psi$ is straightforward, as we shall discuss in Sec. \ref{sec:psi} where we show experiments with different $\psi$s. Hence, we set $\lambda = 5.72~$cm and $\psi=1$ for our numerical calculations in the rest of the paper up to Sec. \ref{sec:psi}.} By using Theorem \ref{theorem_bw}, we can then estimate the bandwidth of the received WiFi CSI during normal breathing as $BW_\text{br} = 2f_{o,\text{br}}$. Note that if the maximum chest displacement is not along the perpendicular line to the ellipse whose foci are the Tx and Rx (see Fig.~{\ref{fig:scenario}}), e.g., if the person is in a different pose, the chest speed will have a smaller velocity component along that line (i.e. smaller $v_\text{max,br}$). In such a case, the value of $\beta'$ will be even smaller and thus still less than 0.55. Thus, according to Theorem {\ref{theorem_bw}}, the bandwidth still remains $BW_\text{br} = 2f_{o,\text{br}}$ for all the cases.

    \begin{figure}
    \centering
    \includegraphics[width=.85\linewidth]{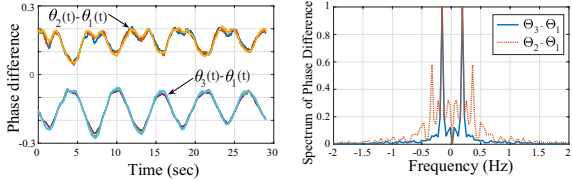}
    \caption{\small (Left) CSI phase difference at multiple subcarriers of the Rx and (right) the spectra of the CSI phase difference signals showing different spectral content for different antennas.}
    \label{fig:new_signal_model_example}
\end{figure}

It is worth noting that the previous literature on breathing monitoring using WiFi signals (either magnitude \cite{liu2015tracking, liu2015contactless,zhang2017wicare} or phase difference \cite{wang2017phasebeat}) assume that the received WiFi signal rises and falls with the same frequency as the rise and fall (inhalation and exhalation) of the chest during the breathing process. Hence, they assume that the received signal has the same spectral content/bandwidth as that of the physical chest motion (i.e. they take $BW_\text{br}$ to be $f_{o,\text{br}}$). 
However, Theorem \ref{theorem_bw} shows that the received WiFi signals can have a spectral content that is different from the physical breathing rate, depending on the value of $\Delta\mu$, with a maximum bandwidth of $2f_{o,\text{br}}$. To see this in effect, 
Fig.~\ref{fig:new_signal_model_example}~ shows the phase difference of the measured WiFi signals between the Rx antennas in a sample experiment, where a person was breathing with a frequency of $f_{o,\text{br}} \approx 0.18$ Hz. It can be seen that while the measured phase difference between antennas 3 and 1 of the Rx has a sinusoid-like pattern similar to that of the breathing motion, the phase difference between antennas 2 and 1 (due to having a different $\Delta\mu$) is experiencing a different pattern, which has a strong frequency component at $2f_{o,\text{br}}$.

\subsection{CSI Bandwidth During Seizures}
As described earlier, a tonic-clonic seizure consists of a \textit{tonic} phase, in which the body muscles stiffen for a few seconds, immediately followed by a \textit{clonic} phase, which is a strong, fast, and repeated stiffening and relaxing of the body muscles that can last for 1 to 3 minutes \cite{jenssen2006long}. Several medical studies have been conducted to analyze body motion during a tonic-clonic seizure through data obtained by accelerometry.  
These studies have found that during the clonic phase of a tonic-clonic seizure, the body muscles rhythmically stiffen and relax with a frequency $f_{o,\text{sz}}$ between $1.5$ and $5$ Hz \cite{quiroga2002frequency,luders1998semiological,nijsen2008accelerometry}, thus making a sinusoid a good approximation for $v(t)$. Therefore, Eq. \ref{eq_theorem} also characterizes the frequency spectrum of the WiFi CSI during a seizure.  In order to find the value of the parameter $v_\text{max}$, and thus $\beta'$, we have looked extensively into the medical literature on seizures.
Several papers have found that the maximum acceleration, $a_\text{max}$, of the body parts during a tonic-clonic seizure typically exceeds $15$~m/s$^2$  \cite{velez2016tracking,kusmakar2018automated}. Since $v(t)$ is sinusoidal, then $v_\text{max,sz}=\frac{a_\text{max}}{2\pi f_\text{sz}}$, and a lower bound for the value of $v_\text{max,sz}$ can be calculated as $v_\text{max,sz} \ge \frac{15}{2\pi\times5}=0.48$~m/s.

Based on the aforementioned seizure motion parameters, one can estimate a lower bound for the bandwidth of the WiFi CSI  during a seizure using Theorem \ref{theorem_bw} as  $BW_\text{sz} = (\beta'_\text{sz}+1)f_\text{sz} = \frac{\psi v_\text{max,sz}}{\lambda} + f_{o,\text{sz}}$. More specifically, by using WiFi channel 48 and $\psi=1$, $v_\text{max,sz} = 0.48$~m/s and $f_{o,\text{sz}} = 1.5~$Hz, a lower bound for the bandwidth of the WiFi signal during the seizure is estimated as $BW_\text{sz} \ge 9.9$~Hz. Note that the aforementioned characterization of the CSI bandwidth during a seizure assumes that the motion of at least one body part is aligned with (or has a strong component on) the perpendicular line to the Tx-Rx ellipse of Fig.~{\ref{fig:scenario}}. This assumption is practical since the uncontrolled muscle jerks during the seizure result in the body parts moving randomly in all different directions. Moreover, it has been shown in the medical literature that a patient's body posture can change to many different positions during a seizure {\cite{mahr2020prone}}. Therefore, there will at least be one body part whose motion direction is aligned with the perpendicular line to the Tx/Rx ellipse.

It is worth stressing that the traditional assumption that the WiFi signal rises and falls with the same frequency of the body motion will result in a bandwidth estimation of $BW_\text{sz} = f_{o,\text{sz}}$, which is far off from the true bandwidth during a seizure.

\rowcolors{2}{white}{gray!09}
\begin{table}[t!]
	\caption{\small Motion parameters and the corresponding bandwidth for 3 kinds of sleep movements.}
\label{table_summary_parameters}
	\centering
	\scriptsize
	\begin{tabular}[h]{m{2.6cm} | c | c | c} 
\Xhline{2\arrayrulewidth}
\darkgray
		   &  &  & \textbf{Normal} \\ 
     \darkgray    & \multirow{-2}{*}{\textbf{Breathing}} & \multirow{-2}{*}{\textbf{Seizure}} & \textbf{Event} \\

\Xhline{2\arrayrulewidth}
		
 %Duration (sec) & N.A. & 60-180 & 8-10.6 \\ 
 %\hline
 Motion Frequency (Hz) & $f_{o,\text{br}}\!=~$0.2-0.3$\!$  & $f_{o,\text{sz}}\!=$1.5-5$\!$  & $f_{o,\text{nm}}=~$2 \\
 \hline
 $v_\text{max}$ (m/s) & $\le$0.01 & $\ge$0.48  & $\le$0.33 \\
 \hline
 BW of WiFi signal (Hz) & $BW_\text{br}=2f_{o,\text{br}}$ & $BW_\text{sz}\ge~$9.9 & $BW_\text{nm}\!\le~$7.8$\!$\\

\Xhline{2\arrayrulewidth}

	\end{tabular}

\end{table}

\subsection{CSI Bandwidth During Normal Sleep Events}
We next delve into the medical literature on sleep motion analysis in order to characterize the parameters relevant for signal bandwidth characterization during normal sleep events, such as position adjustments and jerking in limbs, which people tend to make during different stages of sleep.
It is found that these normal sleep  events occur at an average rate of 3 events per hour \cite{de1992sleep}, and can last for up to 15 seconds each \cite{coussens2014movement}. Furthermore, other studies have performed  time-frequency analysis of the accelerometry data of normal sleep and established that most of the power of normal sleep event signals (e.g. $v(t)$) is concentrated below $f_{o,\text{nm}}=2$~Hz \cite{nijsen2010time}. While $v(t)$ is non-sinusoidal, and no exact closed-form expression exists for the spectrum of the CSI signals for a general $v(t)$, Theorem \ref{theorem_bw} can still be used to approximate the bandwidth of the WiFi CSI, as discussed in Remark 3.
    
In order to calculate an upper bound for the WiFi CSI bandwidth in case of normal sleep events, we focus on wrist movements during sleep, which can have higher speeds due to its relatively lower mass as compared to other body parts.
We utilize the online dataset published by the authors of \cite{walch2019sleep} for the accelerometry data of 31 adults during their sleep, collected from wrist-worn Apple watches. By integrating this acceleration data over time, we get the instantaneous speeds of the wrist during sleep. 
We then use the 99-th percentile value of the speeds calculated from the dataset, which is found to be 0.33 m/s, as an estimate for the maximum possible speed of body parts during normal sleep events.\footnote{Larger speed values are only recorded when quick jerky limb motions take place. Such events are easily identifiable and differentiable from seizures, since they typically last for less than 400~ms {\cite{luders1998semiological}}.} To estimate an upper bound for $v_\text{max}$ during normal sleep events, we assume that the body part with the fastest motion is aligned with the perpendicular line to the Tx-Rx ellipse of Fig. {\ref{fig:scenario}}. Hence, $v_\text{max,nm} \le 0.33~$m/s. Then, Theorem {\ref{theorem_bw}} estimates the bandwidth of the WiFi signals during a normal sleep event as $BW_\text{nm}= \frac{\psi v_\text{max,nm}}{\lambda} + f_{o,\text{nm}}$, where $f_{o,\text{nm}}$ denotes the bandwidth of the modulating signal $v(t)$. At WiFi channel 48 and $\psi=1$, an upper bound of this bandwidth will then be $BW_\text{nm}~\le~\frac{0.33}{\lambda}~+~2~=~7.8$~Hz.

Table \ref{table_summary_parameters} summarizes the results of our WiFi CSI bandwidth analysis during the three considered nocturnal movements: breathing, seizure, and normal sleep events. It can be seen from the table that the bandwidth of the WiFi signal during a movement can be used as a distinguishing feature that differentiates seizures from normal sleep events. We make use of this observation to design a robust nocturnal seizure detection system in the next section.

\begin{figure}
    \centering
    \includegraphics[width=.75\linewidth]{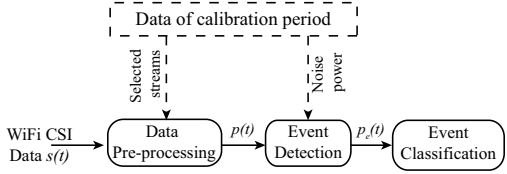}
    \caption{\small Block diagram of the proposed WiFi CSI-based nocturnal seizure detection system. The data pre-processing and event detection blocks utilize the derived WiFi CSI bandwidth during breathing ($BW_\text{br}$). The event classification module then utilizes the derived WiFi CSI bandwidth during both seizure ($BW_\text{sz}$) and normal sleep movements ($BW_\text{nm}$).}
    \label{fig:flowchart}
\end{figure}

\section{System Description} \label{sec:system_description}
In this section, we describe our proposed framework for nocturnal seizure detection using WiFi CSI signals based on the mathematical analysis of Sec. \ref{sec:spectral_analysis}. Fig. \ref{fig:flowchart} shows the block diagram of our proposed system, which starts by pre-processing the WiFi CSI input data to denoise the measured CSI signal and extract the part that carries the information about the human motion. Then, the denoised signal is passed to an event detection module, which decides whether the person is moving or is staying still. In case a movement event is detected (other than breathing), the CSI data during the event is then forwarded to an event classification module, which determines whether this event is a normal sleep event or a seizure. In the latter case, the system alarms the caregiver to take the necessary action. We next describe each of these components in details.

\begin{figure*}
    \centering
    \includegraphics[width=.72\linewidth]{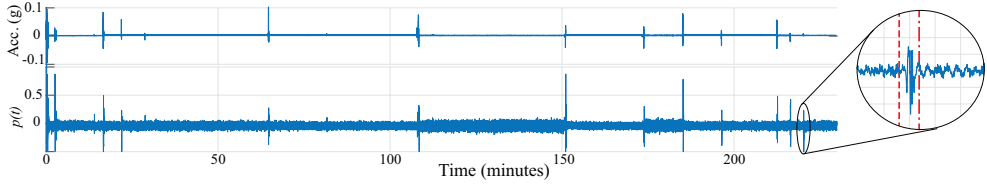}
    \caption{\small (Top) Sample output of the accelerometer attached to the arm of a subject during 4 hours of overnight sleep. (Bottom) The  PCA-denoised stream $p(t)$ of the WiFi data collected during the same time period. The dashed red line in the zoomed-in part shows the start of a sample event that is detected by our event detection module, while the dashed-dotted red line indicates its end.}
    \label{fig:example_overnight}
\end{figure*}

\subsection{Data pre-processing}\label{sec:pre-processing}
As discussed in Sec. \ref{sec:signal_model}, we utilize both the CSI squared magnitude and phase difference since they both carry crucial information about the body motion.
In this paper, we consider off-the-shelf WiFi devices that can be used to extract the complex WiFi CSI information, e.g. Intel 5300 or Atheros AR9580 WiFi cards. In any of these devices, the receiver has $N_R$ receiver antennas, which measure the WiFi CSI information on $N_{sc}$ subcarriers.  Therefore, we extract a total of $N_R\times N_{sc}$ CSI squared magnitude streams, and $(N_R-1)\times N_{sc}$ phase difference streams (i.e. the phase difference between each antenna and antenna 1, for all the $N_{sc}$ subcarriers). In total, we get $N_D = (2N_R-1)\times N_{sc}$ data streams that can be used to extract the motion information. The Intel 5300 WiFi card, for instance, has $N_R=3$ receiver antennas and $N_{sc}=30$ subcarriers, resulting in a total of $N_D = 150$ data streams carrying the motion information of the body. We next show how we process these $N_D$ data streams to extract the informative part about the body motion.

\textit{Outlier Removal:} We use the Hampel identifier \cite{liu2015contactless} to remove the sudden and very short abrupt changes that happen in the data streams due to hardware imperfections \cite{liu2015contactless}.

\textit{Stream Selection:} 
Different subcarriers on the same Rx antenna have different carrier frequencies (or wavelengths), and consequently, they undergo different levels of fading, making some subcarriers noisier than others.
To enhance the system's robustness, it is then important to select only the most informative/least noisy data streams to be subsequently used in the rest of the seizure detection algorithm. In order to do so, we use the data of a short \textit{calibration period}, in which the sleeping person is only breathing and not doing any movements or having a seizure. This one-time calibration can be easily administered by a caregiver prior to system deployment, and recalibration can be done as needed.

The stream selection algorithm works as follows. Since the calibration period is known to have only breathing motion, the CSI data contains frequency components only in the band $f\le BW_\text{br}$, where $BW_\text{br}$ is the maximum bandwidth for WiFi CSI during breathing, which we have shown in Sec. \ref{sec:spectral_analysis} to be  $2f_{o,\text{br}}$. Any frequency content above $BW_\text{br}$ is thus due to noise. Hence, given all the data streams in a calibration window of duration $T_\text{cal}$, we calculate the Signal-to-Noise Ratio (SNR) of the $i$-th data stream as follows
\begin{equation}
    SNR_i =  \sum\limits_{0<f\leq BW_\text{br}} \!\!\!\! S_i(f) \,\,\,\, \bigg/\,\, \sum\limits_{f>BW_\text{br}} \!\!\!S_i(f) ,
\end{equation}
where $S_i(f) = |\sum\limits_t s_i(t)e^{-j2\pi ft} |^2$, $s_i(t)$ is the $i$-th data stream, $BW_\text{br} =2f_{o,\text{br}}$, and $f_{o,\text{br}}$ is the maximum normal breathing frequency, which is equal to 0.3 Hz in adults.

We then select the $K$ data streams with the highest SNRs from the calibration data, and use only this set of streams in the operation phase until after a major event happens (for instance, a seizure). The system can then re-calibrate by processing all the data streams again and re-selecting the new top $K$ streams in terms of SNR in the new person's pose/orientation. For the implementation of our system (see details in Sec. \ref{sec:experimental_validation}), we set $T_\text{cal}~=~13$~sec and $K=15$.

\textit{PCA denoising:} After extracting the set of the best $K$ data streams, we further denoise these streams during operation phase using Principal Component Analysis (PCA) as described in \cite{wang2015understanding}. More specifically, we extract the first principal component $p(t)$ of the data, which carries the motion information since it is common to all the data streams, while the noise is distributed among all the different principal components \cite{wang2015understanding}. 

In order to show the performance of the preprocessing module, we conduct an overnight sleep  experiment, where WiFi transceivers are placed on both sides of a bed on which a subject sleeps. An accelerometer is attached to the upper right arm of the subject to collect ground truth sleep motion data. Fig. \ref{fig:example_overnight} shows a 4-hour snippet of the processed WiFi data $p(t)$ as well as the accelerometer output during the same period. A 13-second calibration period is chosen right after the subject goes to sleep and the selected streams are then used for the rest of the night. It can be clearly seen that the preprocessed WiFi data $p(t)$ carries the same motion information as the accelerometer. In the right part of the figure, we zoom in to one of the movements, where the breathing signal, as well as the motion, can be clearly seen in the WiFi data.

\subsection{Event detection}
As described in the previous section, the data pre-processing module outputs a signal $p(t)$ which is the denoised version of the CSI measurements at the receiver. This signal is then fed to an \textit{event} detection module. By "event", we mean the state of the sleeping person engaging in any kind of \textit{non-breathing} movement. More specifically, the movement can be normal sleep events, e.g., posture adjustments, or abnormal, e.g. a seizure. The nature of the event (whether it is normal or abnormal) will be decided in a later stage, which we shall describe in Sec. \ref{sec:event_classification}.

\begin{figure*}
\centering
  \includegraphics[width=0.75\linewidth]{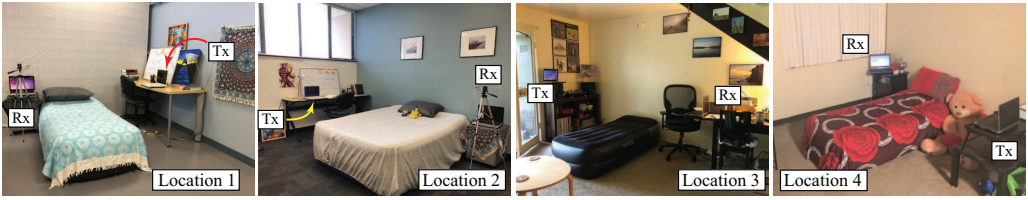}
  \caption{\small We tested our proposed approach in 7 different locations. Four sample locations are shown here.}
  \label{fig:locations}
\end{figure*}

In order to detect an event in the signal $p(t)$, we use a moving window of duration $T_\text{win}^\text{ED}$. If the person was only breathing during an instance of the moving window, the signal $p(t)$ during that window will have a frequency spectrum that is concentrated below $BW_\text{br}$, as discussed in Sec. \ref{sec:pre-processing}. On the other hand, if the person engages in any type of non-breathing movement, the signal $p(t)$ within the time window will have non-negligible frequency content above $BW_\text{br}$. Therefore, we can utilize the energy content of the spectrum of $p(t)$ above the frequency $BW_\text{br}$ to indicate the presence of an event. More specifically, let $\mathcal{H}_1$ denote the hypothesis of having an event, and $\mathcal{H}_o$ denote otherwise. To decide if there is an event at time $t=\tau$, we use the decision rule 
\begin{equation}
    \sum\limits_{f>BW_\text{br, adj}} \left\lvert\sum\limits_t p(t)w(t,\tau)e^{-j2\pi ft}\right\rvert^2 \,\, \underset{\mathcal{H}_0}{\overset{\mathcal{H}_1}{\gtrless}} \,\, \gamma_\text{th},
\end{equation}
where $w(t,\tau)$ is a rectangular window of length $T_\text{win}^\text{ED}$ ending at time $t=\tau$. Note that due to the time-windowing of the signal $p(t)$, the frequency spectrum of the windowed signal is that of the original signal convolved with a sinc function, which increases the bandwidth of the signal by an amount of $1/T_\text{win}^\text{ED}$. Hence, the value of $BW_\text{br}$ is adjusted to be $BW_\text{br, adj} = 2f_{o,\text{br}} + 1/T_\text{win}^\text{ED}$, where $f_{o,\text{br}}$ is the maximum normal breathing frequency.\footnote{Note that for a large window size (large $T_\text{win}$), the additional bandwidth $1/T_\text{win}$ can be neglected with respect to the original signal bandwidth. In such cases, the bandwidth calculations need not be adjusted.}

In order to determine the value of $\gamma_\text{th}$, we utilize the processed data of the calibration period (whose duration is $T_\text{cal}$) described in Sec. \ref{sec:pre-processing} to evaluate the following,
\begin{equation}
    \sigma_c^2 = \max_\tau \left\{ \sum\limits_{f>BW_\text{br,adj}} \left\lvert\sum\limits_t p_c(t)w(t,\tau)e^{-j2\pi ft}\right\rvert^2 \right\}
\end{equation}
where $p_c(t)$ is the processed data of the calibration period. $\sigma_c^2$ is then the maximum energy content of the calibration data above $BW_\text{br,adj}$, which is an estimate of the noise power in the band of $f > BW_\text{br,adj}$ when there is no event. We then set $\gamma_\text{th} = q\,\sigma_c^2$, where $q$ is a design parameter.

The zoomed-in part of Fig. \ref{fig:example_overnight} shows a sample normal sleep movement from a sleeping subject. The vertical red dashed line shows the start of the detected event using our proposed event detection module, while the vertical red dashed-dotted line shows its end. It can be seen that our event detection module was able to accurately localize the start and the end of the event.

\subsection{Event classification}\label{sec:event_classification}
Once an event has been detected, the processed data $p(t)$ during the event is then passed to an \textit{event classification} module that determines whether this event is normal or abnormal.
As discussed in Sec. \ref{sec:spectral_analysis}, the duration of a seizure is usually longer than that of any normal event. However, relying solely on event duration for deciding whether the event is a seizure or not induces an unfavorable delay in the system response, as the system would have to wait for a relatively long period of time before declaring an event as a seizure, which can lead to undesirable complications for the patient.  It is then crucial to analyze the detected events in terms of their frequency content, using the analysis and parameters derived in Sec. {\ref{sec:spectral_analysis}}, in order to have an early and robust detection. We next describe our event classification algorithm.

First, any event whose duration is less than a tolerable value $T_\text{min}$ is declared as a normal event. This step is important to avoid the unnecessary computational overhead of analyzing very short events, such as sleep jerks or very quick limb movements, since it is almost impossible for a tonic-clonic seizure to have such a short duration \cite{luders1998semiological}. It is noteworthy that this comes at the expense of a small delay in the response time, since a seizure would only be declared at least $T_\text{min}$ after its onset. As a design choice, we set $T_\text{min} = 5~$sec for our system implementation. We will show the effect of varying $T_\text{min}$ on the system performance in Sec. \ref{sec:results}. 

For the rest of the events (whose durations are larger than $T_\text{min}$), let $p_e(t)$ denote the processed CSI measurements during the event. We divide $p_e(t)$ to consecutive overlapping windows of length $T_\text{win}^\text{EC}$, and estimate the bandwidth of $p_e(t)$ as the median of the bandwidths of the signals in the overlapping windows. More specifically, the bandwidth of $p_e(t)$ is estimated as 
\begin{equation} \label{eq_median_BW}
    B_{p_e} = \underset{\tau}{\text{median}} \,\, \left\{ B : \frac{\sum\limits_{f>B} \left\lvert\sum\limits_t p_e(t)w'(t,\tau)e^{-j2\pi ft}\right\rvert^2}{\sum\limits_{f>0} \left\lvert\sum\limits_t p_e(t)w'(t,\tau)e^{-j2\pi ft}\right\rvert^2} = 0.1 \right\} ,
\end{equation}
where $w'(t,\tau)$ is a rectangular window of length $T_\text{win}^\text{EC}$ ending at $t=\tau$, and the quantity inside the braces is the 90-th percentile bandwidth of the signal within the window ending at $t=\tau$. For an ongoing long event, the bandwidth $B_{p_e}$ is updated by adding more time windows of the new data to the calculation of Eq.~\ref{eq_median_BW}. This method of estimating the bandwidth of the signal $p_e(t)$ is favorable for real-time operation, since it requires a fixed-length FFT operation for a window of size $T_\text{win}^\text{EC}$ to update the bandwidth of an ongoing event.

We declare a seizure if the bandwidth $B_{p_e}$ exceeds a threshold $f_\text{th}$.  By using the spectral analysis of Sec. \ref{sec:spectral_analysis} and the corresponding bandwidth calculations of Table \ref{table_summary_parameters}, we set $f_\text{th} = \frac{9.9+7.8}{2}=~$8.85~Hz, since this value optimally separates the bandwidths of the WiFi signal during seizures from the ones during normal events. We will study the effect of changing $f_\text{th}$ on the system performance in Sec.~\ref{sec:results}.

\section{Experimental setup} \label{sec:experimental_validation}
In this section, we describe the experimental setup we shall use as a proof-of-concept for our proposed seizure detection system.

\vspace{2pt}
\textbf{Experimental Setup:} For the WiFi CSI data collection, we use two laptops equipped with Intel 5300 WiFi cards. One of the laptops (the Tx) transmits WiFi packets at a rate of 200 packets per second on WiFi channel 48, which has a carrier frequency of 5.24~GHz. The other laptop (the Rx) uses CSItool \cite{halperin2011tool} to measure the CSI data of 30 WiFi subcarriers on 3 Rx antennas. The CSI magnitude data and the phase difference data with respect to antenna 1 (i.e. $\theta_2(t) - \theta_1(t)$ and $\theta_3(t) - \theta_1(t)$) are then logged and processed offline using MATLAB. We collect the WiFi data in 7 different dorm rooms/bedrooms (some of which are shown in Fig.~\ref{fig:locations}). In all the locations, we start by placing the Tx and Rx on two different sides of the bed on which the test subject lies down (with Tx-Rx distance of $\sim 2.5~$m). The antennas of the Tx and Rx are both elevated by 70~cm above the bed level. We then study the impact of different Tx/Rx configurations in Sec. \ref{sec:results}. Note that for the Rx, external tripod-mounted antennas may be used in order to make the Rx at the same height as the Tx. This configuration for the relative positioning of the Tx, the Rx, and the bed results in $\psi \approx 1$ (the angle $\phi$ in Fig. {\ref{fig:scenario}} is $\sim 60^\circ$), independent of the person's pose or orientation on the bed.

\vspace{2pt}
\textbf{Test Subjects and Experiment Protocol:} We recruited a total of 20 student actors (5 females and 15 males) to participate in our experiments, where each subject participates in one or more of the experimental locations. In total, the number of subjects participating in each of the 7 locations are 11, 6, 4, 2, 1, 1, and 1 subjects, respectively.\footnote{The Institutional Review Board (IRB) committee has reviewed this research and determined that it does not constitute human subject research. Furthermore, all the experiments that were carried out during the pandemic followed the strict COVID-19 safety guidelines put in place by our institution.} 
Each participant was consensually trained on how to simulate a tonic-clonic seizure and shown public online YouTube videos explaining how tonic-clonic seizures look like. It is worth noting that seizure acting is a common practice in medical schools, where healthy persons (known as \textit{standardized patients}) are recruited to act out different medical conditions to provide introductory training opportunities for medical students \cite{barrows1993overview,dworetzky2015interprofessional}. Hence, testing a system on simulated seizures is an important step towards more advanced clinical  trials. 

For each subject, the receiver starts logging the CSI data when the subject is in a sleep state (only breathing and in any generic position) for at least 15 seconds (part of which to be used as one-time calibration data). Then the subject starts simulating seizures and normal sleep movements. Each seizure instance is simulated for at least 20 seconds.\footnote{An actual tonic-clonic seizure can last for 1 to 3 minutes. However, it is a physically-challenging task for a healthy person to simulate it for such a long time.} In total, each participant does 10 seizure simulations per location, resulting in a total of 260 independent instances of seizure data across all the locations. Similarly, each subject performs several normal non-breathing sleep movements spontaneously in each experiment. By observing the subjects' movements, they included posture adjustments (e.g. switching from lying on their side to lying on their back), limb-only movements (e.g. stretching or tucking the knee), scratching, stretching, coughing, sneezing, and sleep jerks. Overall, we collected a total of 410 independent normal sleep events from all subjects in all the locations.\footnote{Sample data files and detection/classification codes are available in this URL \url{https://doi.org/10.21229/M9ZT09}}

\vspace{2pt}
\textbf{Performance Metrics}:
We test the performance of our system according to the following two performance metrics:

\noindent 1. Seizure Detection Rate (SDR): which is defined as the number of detected seizures, divided by the total number of seizures (expressed as a percentage).

\noindent2. Probability of False Alarm ($P_\text{FA}$): which is defined as the number of normal sleep events which are incorrectly classified as seizures, divided by the total number of detected normal sleep events.

\noindent3. Response Time (RT) for seizures: which is defined as the time at which the event classification module detects the seizure, measured with respect to the seizure's onset.

\vspace{2pt}
\textbf{Algorithm Parameter Values:} 
We set the following values for different algorithm parameters, $T_\text{cal}=~$13~sec, $T_\text{win}^\text{ED}=~$2~sec, $T_\text{win}^\text{EC}=~$4~sec, $T_\text{min}=~$5~sec, $K=~$15, and $q=~$2. \rev{The optimum classification threshold, $f_\text{th}$, is then found based on our proposed mathematical framework to be $f_\text{th}=8.85$~Hz, as shown in Sec.~\ref{sec:spectral_analysis}.  It is worth stressing that this threshold is found based on our rigorous theoretical characterization of the bandwidth, and not based on empirical data.} The effect of varying some of these parameters on system performance is shown in Sec. \ref{sec:results}.

\begin{figure}
\centering
  \includegraphics[width=1\linewidth]{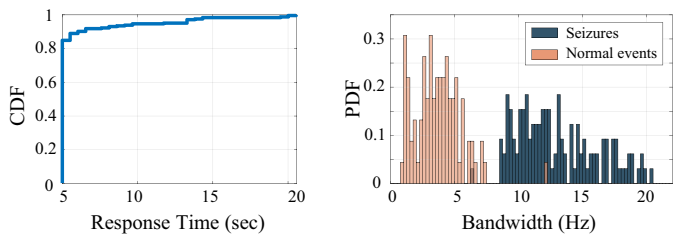}
  \caption{\small (Left) CDF of the system's response time. The mean response time is 5.69 seconds.
  (Right) PDF of the bandwidths of seizure events and normal sleep events, showing a gap in the 7-9~Hz band.}
  \label{fig:misc_results}
\end{figure}

\section{Evaluation Results}\label{sec:results}
In this section, we present the performance evaluation results of our proposed seizure detection algorithm.

For the seizure data instances in all the locations, our proposed system was able to detect 244 out of the 260 seizures, resulting in a seizure detection rate (SDR) of 93.85\%. It is worth noting that the event detection module was able to detect all the 260 seizures. However, the event classification module misclassified 16 out of the detected 260 events. Fig. \ref{fig:misc_results}~(left) shows the CDF of the response times of the detected seizures, showing that our system achieves a Mean Response Time (MRT) of 5.69 sec.
Such an early detection is important for the caregiver to provide the needed medical assistance as soon as possible. In terms of locations, the seizure detection rate in the 7 locations was 93.6\%, 95\%, 90\%, 90\%, 100\%, 100\%, and 100\%, while the mean response time in the 7 locations was 5.8, 5.8, 5.68, 5.58, 5.65, 5 and 5.1 sec, respectively. This shows that the system's performance is insensitive to the deployment environment, since the static multipath does not affect the information-bearing parts of the received WiFi signal, as discussed in Remark 2.

In terms of normal events, our event detection module was able to detect 406 out of the 410 normal events. It is worth noting that it is irrelevant if the system misses some normal events, as the main purpose of the system is seizure detection with as few false alarms as possible.  
Out of the detected normal events, only 4 events were incorrectly classified as seizures, resulting in a probability of false alarm $P_\text{FA} =~$0.0097. Fig.~\ref{fig:misc_results}~(right) shows the densities of the measured bandwidths of the WiFi signals during seizure events as well as normal sleep events.  
The distributions of the bandwidths show a clear gap in the band of 7-9~Hz, which validates the theoretical bandwidth characterization of Sec. \ref{sec:spectral_analysis}.

\begin{figure}
\centering
  \includegraphics[width=1\linewidth]{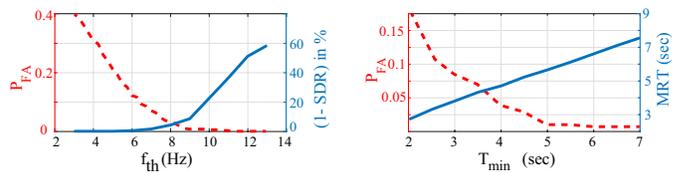}
  \caption{\small (Left)  $P_\text{FA}$ and (1-SDR) as a function of $f_\text{th}$: increasing $f_\text{th}$ degrades SDR while improving $P_\text{FA}$. (Right) MRT and $P_\text{FA}$ as a function of $T_\text{min}$: increasing $T_\text{min}$ degrades MRT while improving $P_\text{FA}$.}
  \label{fig:fth_tmin_effect}
\end{figure}

\vspace{3pt}
\noindent\textbf{Processing time:} It takes 18~ms, on average, to process one second of collected data, using our algorithm of Sec.~\ref{sec:system_description}.

\vspace{3pt}
\noindent\textbf{Comparison to state-of-the-art:} {\cite{van2016non}} provides a survey for in-home tonic-clonic seizure detection algorithms that use different modalities, e.g. accelerometry, mattress units, and video, to detect tonic clonic seizures on real epilepsy patients. Table {\ref{tab_results_comparison}} compares the performance of our proposed system to  the performance of the different detection techniques reported in the survey of {\cite{van2016non}}, as well as other multimodal seizure detection papers.
Overall, our results show the robustness of our proposed system, in terms of achieving a very good seizure detection rate, probability of false alarm, and a fast average response time of 5.69 seconds to detect a seizure, while being non-invasive, and privacy preserving. We note that part of the contribution of this paper was also to develop a new mathematical model that can enable seizure detection using WiFi signals, while most of the existing work is mainly either testing an existing product, or utilizing straightforward modalities, e.g. accelerometry. Furthermore, our approach is the only privacy-preserving one that is also non-invasive. While our results are based on simulated seizures, they constitute a strong proof-of-concept for our proposed idea/mathematical models, which shows how RF signals (e.g. WiFi) can be used as a non-invasive, robust, and affordable alternative for nocturnal seizure detection. Hence, our proposed algorithms serve as a basis for a system that can be subsequently tested in clinical settings, towards the ultimate goal of making such technology available to the public.

\subsection{Effect of varying $f_\text{th}$ and $T_\text{min}$}
Based on our theoretical analysis of Sec. \ref{sec:spectral_analysis},  we concluded that a threshold of $f_\text{th} =~$8.85~Hz optimally separates the bandwidth of the WiFi signals during normal sleep movements from that during seizures. In this section, we study the effect of varying $f_\text{th}$, while keeping all other system parameters at their default values. Fig.~\ref{fig:fth_tmin_effect}~(left) shows (1-SDR) and $P_\text{FA}$ as a function of $f_\text{th}$. It can be seen that SDR decreases (becomes worse) when increasing $f_\text{th}$, since more seizure events can go undetected due to their bandwidth being less than the higher $f_\text{th}$. On the other hand, increasing $f_\text{th}$ improves $P_\text{FA}$, since it becomes less likely for the bandwidth of the WiFi signal during a normal sleep event to exceed a higher $f_\text{th}$. We can see that the mathematically-driven value of 8.85~Hz strikes a good balance between SDR and $P_\text{FA}$.

Next, we study the effect of varying $T_\text{min}$, which is the minimum duration for an event to be passed to the event classification module.  Fig.~\ref{fig:fth_tmin_effect}~(right) shows MRT and $P_\text{FA}$ as a function of $T_\text{min}$.  Expectedly, increasing $T_\text{min}$ increases MRT, since a higher $T_\text{min}$ means that the event classification module (which determines whether the event is a seizure or not) is not activated for a longer time after the seizure onset. On the other hand, increasing $T_\text{min}$ improves $P_\text{FA}$, since a higher portion of the normal events are declared normal by default due to their short duration. It can be seen that the chosen value of $T_\text{min}~$= 5 sec strikes a good balance in the MRT-$P_\text{FA}$ tradeoff. It is worth noting that SDR does not change as a function of $T_\text{min}$ in Fig.~\ref{fig:fth_tmin_effect}, and as such is not plotted.

\begin{table}
\rowcolors{2}{white}{gray!09}
	\centering
	\caption{\small Comparison with state-of-the-art in seizure detection. }
	\label{tab_results_comparison}
    \scriptsize
	\begin{tabular}[h]{ c | c | c | c | c | c | c}
\Xhline{2\arrayrulewidth}
\darkgray
		  &  & \textbf{Seizure} & \textbf{MRT} &  \textbf{$P_\text{FA}$} & \textbf{Non-} & \\ 
\darkgray
		\multirow{-2}{*}{\textbf{Paper}} & \multirow{-2}{*}{\textbf{Modality}} & \textbf{det. rate} & \textbf{(sec)}	&\textbf{/night } & \textbf{invasive} & \multirow{-2}{*}{\textbf{Privacy}}\\ 
\Xhline{2\arrayrulewidth}

\rowcolor[gray]{.96}		Avg. of   &  & &  &  &  &  \\
    	\cite{van2016non} & \multirow{-2}{*}{Acc.} & \multirow{-2}{*}{90.7\%} & \multirow{-2}{*}{41.1} & \multirow{-2}{*}{0.3} & \multirow{-2}{*}{\xmark} & \multirow{-2}{*}{\checkmark} \\
		\cite{kramer2011novel} & Acc. & 91\% & 17 & 0.1 & \xmark & \checkmark \\
		\cite{poppel2013prospective} & \rev{Piezo Tech.}	& 84.6\%       & -- & --  & \xmark & \checkmark  \\ 
		\cite{kalitzin2012automatic} & Video	& 95\%       & -- & 1  & \checkmark & \xmark  \\ 
		\cite{arends2018multimodal} & Acc.+HR	& 96\%       & 15 & 0.23  & \xmark & \checkmark  \\ 

\Xhline{1\arrayrulewidth}
\darkgray
		\textbf{Ours}	    & \textbf{WiFi} & \textbf{93.85\%}    &\textbf{5.69}    &\textbf{0.23}*  & \checkmark & \checkmark  \\ 
\Xhline{2\arrayrulewidth}
	\end{tabular}
	
	\vspace{3pt}
	\scriptsize { * Based on an average of 3 normal events per hour, for 8 hours of night sleep. \\Acc = Accelerometry. HR = Heart Rate. \rev{Piezo technology is used in the form of units installed in/under mattresses to detect pressure changes.}}
\end{table}

\subsection{Effect of Tx-Rx positioning} \label{sec:psi}
For all the previous results, we considered a setting (henceforth denoted by C\#1) where the position of the Tx/Rx with respect to the bed resulted in a value of $\psi \approx 1$ (see Fig. \ref{fig:locations}). In this section, we test our system with different Tx/Rx positions resulting in different $\psi$s. Based on our analysis of Sec.~\ref{sec:signal_model}, $\psi$ impacts the bandwidth characterization of seizure and normal sleep movements as follows:
$BW=\frac{\psi v_\text{max}}{\lambda}+f_o$, where $v_\text{max}$ and $f_o$ denote the $v_\text{max}$ and $f_o$ of the corresponding cases.

To test the sensitivity of our system to different Tx/Rx positions and their corresponding $\psi$, we carry out extensive experiments on one test subject (in location 4 of Fig. \ref{fig:locations}) by changing either the Tx/Rx locations in the same horizontal plane (to which we refer as changing their configuration), or changing their heights.    

\vspace{3pt}
\noindent\textbf{Changing Tx/Rx configuration:} In order to test the sensitivity of the system to the placement of the Tx/Rx in the horizontal plane, we conduct experiments in two additional configurations,  C\#2 and C\#3. In C\#2, the Tx and the Rx are placed on one side of the bed, such that the line connecting the Tx and Rx is parallel to the edge of the bed and 70~cm away from it. Such a configuration can be of particular interest in practical situations in which one side of the bed is not accessible, e.g. if the bed is placed next to a wall. The distance between the Tx and the Rx is 2~m, and both are elevated by 70~cm above the bed level. This setup results in $\psi\approx1.4$, which will result in $BW_\text{sz}\ge~$13.23~Hz, and $BW_\text{nm} \le~$10.06~Hz using our mathematical derivations. We thus use $f_\text{th}=\frac{13.23+10.06}{2}=~$11.64~Hz for this configuration.
On the other hand, in C\#3, the Tx and Rx are placed on two different sides of the bed, with a Tx-Rx distance of 3.6~m, while they are elevated by 70~cm above the bed level. This setup results in $\psi\approx~$0.7, and $f_\text{th}=\frac{7.36+6.03}{2}=~$6.69~Hz.

\begin{table}
    \centering
    \caption{\small Performance in different Tx/Rx placement settings. }
	\label{tab_config_psi}
 	\scriptsize
	\begin{tabular}{ >{\centering\arraybackslash}p{1cm} |  >{\centering\arraybackslash}p{.4cm} |  >{\centering\arraybackslash}p{0.9cm} |  >{\centering\arraybackslash}p{0.6cm} |  >{\centering\arraybackslash}p{.6cm} |  >{\centering\arraybackslash}p{0.8cm}  >{\centering\arraybackslash}p{1.2cm}}
\HHHline{1pt}{-|-|-|-|-|-|~}

\darkgray
		 &  &  & \textbf{SDR} & \textbf{MRT}&\\ 
		
\darkgray
		\multirow{-2}{*}{\textbf{Setting}} & \multirow{-2}{*}{\textbf{$\psi$}} & \multirow{-2}{*}{\textbf{$f_\text{th}$ (Hz)}} & \textbf{\% }	&\textbf{(sec)} & \multirow{-2}{*}{\textbf{$P_\text{FA}$}}\\ 
		
        % \textbf{Setting} & \textbf{$\psi$} &  & \textbf{MRT} & \textbf{$P_\text{FA}$} \\
\HHHline{.8pt}{-|-|-|-|-|-|~}
\lightgray		C\#1 & 1    & 8.85    & 93.85 & 5.69   & 0.0097   \\ 
		\HHHline{.3pt}{======~}
\lighttgray		$\,\,\,$C\#2\newline C\#3 & $\,1.4$ 
		\newline 0.7	& $\,11.64$ \newline 6.69 & $\,\,90$ \newline 100  & $\,5.67$ \newline 6.75  & 0.008\newline 0.016  & \cellcolor{white}\hspace{-4pt}\rdelim\}{2}{*}{\footnotesize \textit{Different configs.}}\\ 
		\HHHline{.3pt}{======~}
\lightgray        $\,\,\,$C\#4 \newline C\#5 & 1.44 \newline 1.61	& $\,11.94$ \newline 13.15 & $\,\,90$ \newline 100 & $\,\,5$  \newline 7.3 & 0.008 \newline 0.016 & \cellcolor{white}\hspace{-4pt}\rdelim\}{2}{*}{\footnotesize \textit{Different heights}} \\
%\Xhline{2\arrayrulewidth}
\HHHline{.8pt}{-|-|-|-|-|-|~}
	\end{tabular}
	
\end{table}

In each of the configurations, the test subject simulates a total of 10 seizure instances and 125 normal sleep events. We summarize the evaluation results of these experiments in Table \ref{tab_config_psi}. It can be seen that the performance of the system in C\#2 and C\#3 is comparable to that of the main configuration (C\#1), showing that the performance of our proposed pipeline is robust to different Tx/Rx configurations.

\vspace{3pt}
\noindent\textbf{Changing Tx/Rx heights:} In order to test the sensitivity of the system to antenna heights, we conduct experiments in two additional settings, C\#4 and C\#5. In both settings, the Tx and Rx are placed $\sim$2.5~m apart on both sides of the bed (similar to C\#1), but their heights are elevated to 1.3~m above the bed level in C\#4, and 1.7~m above the bed level in C\#5. Using simple geometry, it can be verified that in C\#4, $\psi=$~1.44 ($f_\text{th}=~$11.94~Hz), while in C\#5, $\psi=~$1.61 ($f_\text{th}=~$13.15~Hz). Again, in both settings, the test subject simulates a total of 10 seizure instances and 125 normal sleep events. Table \ref{tab_config_psi} shows that the performance of the system in C\#4 and C\#5 is comparable to that of the other configurations, indicating that our proposed pipeline is robust to different Tx/Rx heights.

\subsection{Multi-person operation} \label{sec:multipeople_results}
In-home seizure detection systems are primarily designed for caregivers who do not share the same bed (or bedroom) as the patient, since, otherwise, they would be alerted by the patient's seizure movements. However, in order to show the robustness of our proposed system, we next show that it can still be deployed in a multi-person setting where multiple people share the same bed. In such a case, the event detection module detects any movement done by any of the sleeping persons. In order to test this, we conducted a 10-minute experiment where two people lie down next to each other on a bed. Person~1 simulates seizures at the 6 and 9-minute marks. Otherwise, both people frequently simulate normal sleep movements. As such, there are a number of instances where both people move at the same time, or one person moves normally while the other one is simulating a seizure. Fig.~\ref{fig:example_multipeople}~(top) shows the PCA-denoised stream $p(t)$, in which perturbations are clearly visible whenever either persons engages in any kind of movement, while  Fig.~\ref{fig:example_multipeople}~(bottom) shows the bandwidth of the WiFi signals during the detected events. It can be seen that the bandwidth exceeds $f_\text{th}$ for an extended period of time only during the seizure instances, which are correctly classified as seizures, even though the second person was moving during the second seizure instance. Otherwise, during normal movements, even if both persons are moving simultaneously, the events are not classified as seizures.

\section{Further Discussions}\label{sec:discussion}
\noindent\textbf{Robustness to movements by other people:} In Sec.~\ref{sec:multipeople_results},
we have shown that our proposed system is robust to movements by other sleeping people in the same environment, since their normal sleeping movements have the same characteristics as those of the patient. Next, consider the case where other simultaneous movements happen, such as those of a walking person.  The spectrogram of the signal reflected off of a walking person has specific characteristics.  Thus, as part of future work, one can study the differentiability of the signals induced by walking from those induced by seizures.  Furthermore, the reflected signals off of other moving targets can also be filtered out at the Rx by exploiting more signal dimensions. For instance, multiple antennas at the Rx can separate the received signals based on their Angle-of-Arrival (AoA).

\begin{figure}
    \centering
    \includegraphics[width=1\linewidth]{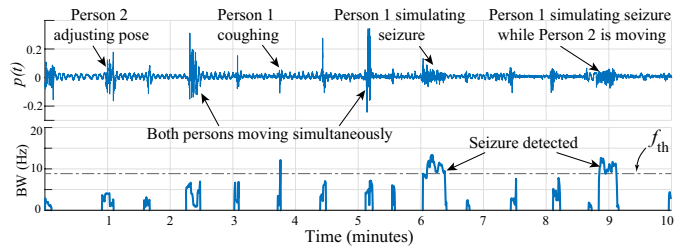}
    \caption{\small (Top) The PCA-denoised data $p(t)$ in a 10-minute experiment with 2 subjects. (Bottom) The bandwidth of $p(t)$ during the detected events. It can be seen that the seizures are the only events whose bandwidth exceeds $f_\text{th}$ for an extended period of time.}
    \label{fig:example_multipeople}
\end{figure}

%\vspace{4pt}
%\noindent\textbf{Real-time processing:} While the results presented in this paper are based on the offline processing of the collected CSI data, the design of the algorithm has taken real-time processing considerations into account, by relying on moving windows. Moreover, our proposed algorithm is computationally efficient, taking only 18~ms on average to process one second of data. 
%While the results presented in this paper are based on the offline processing of the collected CSI data, the design of the algorithm has taken real-time processing considerations into account. For instance, the steps of the algorithm, such as event detection and bandwidth estimation, rely on moving windows, instead of processing all the data at once, which facilitates the implementation of the algorithm in real-time. Moreover, our proposed algorithm is computationally efficient, taking only 18~ms on average to process one second of the data. Since the window size of the event classification module is 4~sec, this means that the additional processing delay to the system's response time (whose mean is 5.69~sec) is only 18$\times$4=72~ms. 

\vspace{4pt}
\noindent\textbf{Clinical trials:} 
In this paper, we proposed the first RF-based system for nocturnal seizure detection, by developing mathematical models that can enable this. We also validated our proposed approach by extensive experiments on seizures simulated by actors. The results of this preliminary validation show the great potential of using WiFi signals as an appealing alternative to the currently available products, which are costly, uncomfortable, or unreliable. Towards the ultimate goal of making this technology available to the public, the next step is to develop a prototype of the proposed system, which can then undergo extensive clinical trials on real patients, and become available to the epilepsy patients and their caregivers.

\section{Conclusion}\label{sec:conclusion}
In this paper, we have considered the problem of nocturnal seizure
detection in epilepsy patients using WiFi signals measured on a device placed in the vicinity of the sleeping patient. We first provided a  mathematical analysis for the spectral content/bandwidth of the WiFi signal during different kinds of sleep body movements (e.g., seizure, normal movements, and breathing), showing that the bandwidth of the signal can be used to robustly differentiate a seizure from normal movements. We then utilized this analysis to design a robust seizure detection system, which detects all non-breathing body motion events and classifies them, based on their spectral content, to normal movements and seizures. We experimentally validated our proposed system using WiFi CSI data collected from 20 actors in 7 different locations, where they simulated a total of 260 seizures as well as 410 normal sleep movements. Our proposed system achieved a very low probability of false alarm of 0.0097, while being very responsive to seizure events, detecting 93.85\% of the seizure instances with an average response time of only 5.69 seconds. 
These promising results show the potential of using WiFi signals as an accurate and cheap alternative to traditional seizure detection systems.

\bibliographystyle{IEEEtran}

\bibliography{ref_seizure}

\end{document}